\documentclass[aps,prresearch,twocolumn,superscriptaddress,nofootinbib]{revtex4-2}

\usepackage{amsmath,amssymb,amsthm}
\usepackage{graphicx}
\usepackage[hidelinks]{hyperref}

\newtheorem{theorem}{Theorem}
\newtheorem{lemma}{Lemma}
\newtheorem{proposition}{Proposition}
\newtheorem{corollary}{Corollary}
\newtheorem{conjecture}{Conjecture}

\newcommand{\kap}{\kappa}
\newcommand{\ip}[2]{\langle #1,\, #2\rangle}

\begin{document}

\title{Conservation capacity of local learning rules in physical networks}

\author{Bijaya Dangol}
\affiliation{Independent researcher}

\begin{abstract}
How many memories can a local learning rule protect, and what fixes the
number? Physical learning rules train resistive networks through local
measurements, and the standard rules conserve a mass-like function of
the conductances, a law whose general form was posed as an open problem
with the expectation that no useful solution theory exists. We answer
it, in the direction the expectation ran against, and the answer is a
capacity theory. The Tellegen identity
behind the conserved mass localizes: the feedback state is pinned to
zero at the inputs, so every sector of the circuit that the input
electrodes separate carries its own private conserved mass, by an
argument consuming only Kirchhoff's law and that boundary condition,
hence valid for arbitrary nonlinear branch laws. The number of
independent sector masses is a topological property of the circuit,
bounded by the number of output electrodes. An untrainable element can
destroy the mass of its own sector and of no other, which says where
fixed nonlinear components may be placed; per-edge learning rates
select which functionals are conserved and never how many; and adjoint
coupled learning, which clamps its outputs before measuring, drains
every output-carrying sector at a rate set by the squared output
multipliers. In linear circuits we then classify the identities: exact
rational computation over all 502 circuits on up to five vertices with
two inputs and one or two outputs, and hundreds of larger ones, yields
a closed-form count, proved on that family and conjectured in general,
with the sector statement a theorem at every circuit size. The
anticipated series-parallel polynomial laws appear as exactly the
series-class cubic differences, and the number the opening question
asks for is a budget: one designed mass per sector, the differences
the topology donates, and one broadcast scalar for each protected
functional beyond.
\end{abstract}

\maketitle

\section{Introduction}
\label{sec:intro}

A network of adjustable resistors can be trained to compute:
contrastive local rules compare two equilibrium states of the circuit
and adjust each conductance using only quantities measurable at that
edge \cite{stern2021supervised,dillavou2022demonstration,
stern2023learning}. Equilibrium propagation (EP) nudges the outputs
with a force proportional to the error \cite{scellier2017equilibrium};
coupled learning (CL) clamps them a small step toward their targets
\cite{stern2021supervised}. In the continuous-time small-nudge limit
both rules exactly conserve the conductance mass
$K=\tfrac12\sum_e\kap_e^2$, a law established in
Ref.~\cite{mcginnis2026conservation} and used there to prove
convergence; untrainable elements, not the law itself, fix what a
trained circuit remembers of its initialization
\cite{dangol2026memory}, and modulating each
edge's learning rate chooses which mass a sector holds invariant
(Sec.~\ref{sec:capacity}).

All of these results rest on one conserved quantity, and what else a
circuit could conserve is a design question before it is a mathematical
one. If a circuit supports several independent conserved functionals, a
rule could protect several stored quantities at once; if it supports
one, every local rule faces the same memory budget, and enlarging it
requires communication. The mathematical form of the question is posed
as an open problem in Ref.~\cite{mcginnis2026conservation}: for a
general functional the conservation condition is an overdetermined
partial differential equation which that reference expects to admit no
nontrivial solutions, while its preliminary calculations suggest that
series-parallel topologies nonetheless admit polynomial conservation
laws beyond the squared norm. The answer runs against the expectation,
and it has two parts, the first needing almost nothing. The
identity behind the conserved mass is Tellegen's, and Tellegen's theorem
is a statement about Kirchhoff's laws rather than about elements, so it
localizes to any part of the circuit that the input electrodes separate
and it survives arbitrary nonlinear branch laws
(Sec.~\ref{sec:tellegen}). That already fixes the count of sector
masses, caps it by the number of outputs, and says where fixed
nonlinear components may sit without destroying a stored memory. The
second part is a completeness statement, and there linearity is
essential: we classify, on the certified family, every identity
available to every local rule, which needs the states to be rational in
the conductances so that numerators are spanning-forest sums.

We answer the separable case completely for an explicit family of
circuits and give the matching construction. A bootstrap lemma converts
the conservation condition from a functional equation into finite
linear algebra: fixing all conductances but one and using the
Sherman--Morrison structure of the network response forces every
per-edge coefficient function to be a polynomial of degree at most two
in its own conductance. The identity space of a given circuit is then
the null space of a finite system, which we compute in exact rational
arithmetic, with no floating point anywhere, for every connected graph
on up to five vertices with every terminal placement up to isomorphism,
502 configured circuits, and for hundreds of larger random circuits.
The computed
dimension matches a closed-form count in every case, and the count
itself came out of the certificates richer than the families we set out
to confirm. That localization is what the certificates see: even
standard EP conserves the mass of each grounded sector separately, a
refinement invisible in the aggregate law of
Ref.~\cite{mcginnis2026conservation}. Beyond the sector identities the
only further dimensions are degeneracies with complete
characterizations: series classes, where shared currents in both probe
states make cubic-mass differences conserved, the polynomial laws
anticipated for series-parallel topologies in
Ref.~\cite{mcginnis2026conservation}; parallel classes, where shared
drops make linear-mass differences conserved; and feedback-dead edges,
which no rule measuring against the free state can move, their
conductances being trivially invariant. Hard thresholding \cite{chatterjee2025remembrance} occupies
the frozen branch of this classification; rate modulation
(Sec.~\ref{sec:capacity}) designs within the sector branch.

The capacity statement follows, and its core is proved
unconditionally. The scaling homogeneity of the two probes grades the
identity space by polynomial degree, separating the sector, series, and
parallel families into independent slices; the linear slice is then
identified exactly, at every circuit size, by a reduction to a
certified landing zone (Theorem~\ref{thm:blocks}). Consequently every
local contrastive rule conserves one designable mass per grounded
sector, however the rule is built, and never two independent masses on
the same sector. The grounding pattern, a boundary-condition choice,
thereby fixes the circuit's free memory budget. Beyond it, one
global broadcast scalar per additional invariant closes the gap, with
correction directions the hardware already measures: auxiliary
contrastive fields satisfy the same structural identity as the update
and so leave every designed mass untouched, and we verify two
simultaneous designed masses at machine precision. The construction is
delicate in two instructive ways. Projecting along the target
functionals' gradient directions, the textbook constrained-flow ansatz,
conserves the targets while destroying the designed mass. And a
single-output network admits no ladder step at all: its contrastive
field family has rank one in the data, so no independent correction
direction exists.

\section{Setup}
\label{sec:setup}

A circuit is a connected graph with $N$ nodes, $M$ edges, incidence
matrix $D\in\mathbb{R}^{N\times M}$ with columns $d_e$, and positive
conductances $\kap\in\mathbb{R}^M_{>0}$, giving the Laplacian
$L(\kap)=\sum_e\kap_e d_e d_e^{\mathsf T}$ and energy
$G(x;\kap)=\tfrac12\ip{x}{L(\kap)x}$. Input nodes are clamped through
$P^{\mathsf T}x=v$ and outputs are read through $Q^{\mathsf T}x$, with
inputs and outputs disjoint. Training compares two probe states of the
same circuit \cite{scellier2017equilibrium,stern2021supervised,
mcginnis2026coercivity}: the free state $x_0(\kap,v)$, the energy
minimizer with inputs clamped and outputs floating, and the nudge
response $y(\kap,r)$, the state with inputs clamped to zero and
currents $Qr$ injected at the outputs, where for EP $r$ is the output
error. Edge $e$ measures its own drops in the two probes,
\begin{equation}
\alpha_e = d_e^{\mathsf T}x_0(\kap,v), \qquad
\beta_e = d_e^{\mathsf T}y(\kap,r),
\end{equation}
and a \emph{local rule} is any update of the form
\begin{equation}
\dot\kap_e = h_e(\kap_e,\alpha_e,\beta_e),
\label{eq:localrule}
\end{equation}
with $h_e$ analytic. The rule is \emph{contrastive} if
$h_e(\kap,\alpha,0)=h_e(\kap,0,\beta)=0$: no update without both a
forward signal and a feedback signal. EP is the bilinear case
$h_e=\alpha_e\beta_e$; CL is the same bilinear form with the output
error premultiplied by the Dirichlet-to-Neumann map
\cite{mcginnis2026coercivity}, which changes the injection data and not
the layer structure, so the two share an identity space. The
rate-modulated rules of Sec.~\ref{sec:capacity} are
$h_e=\alpha_e\beta_e/w_e$.

A separable functional $Q(\kap)=\sum_e q_e(\kap_e)$ is conserved by the
rule when $\sum_e q_e'(\kap_e)\,h_e(\kap_e,\alpha_e,\beta_e)=0$ holds
identically over $\kap>0$ and all data $(v,r)$; conservation along
every training trajectory, for every task and batch, is equivalent to
this identity because batch updates are averages of single-datum
updates. Since $\alpha$ is linear in $v$ and $\beta$ is linear in $r$,
the Taylor layers of $h_e$ in $(\alpha,\beta)$ are homogeneous of
distinct bidegrees and must vanish separately:

\begin{lemma}[Layer separation]
\label{lem:layers}
The rule \eqref{eq:localrule} conserves $Q$ if and only if, for every
bidegree $(p,q)$ with $p,q\ge1$,
\begin{equation}
\sum_e u_e(\kap_e)\,\alpha_e^{\,p}\beta_e^{\,q} = 0
\quad\text{identically}, \qquad
u_e = q_e'\,h_e^{(p,q)},
\label{eq:layer}
\end{equation}
where $h_e^{(p,q)}(\kap)$ is the $(p,q)$ Taylor coefficient of $h_e$ at
zero signal.
\end{lemma}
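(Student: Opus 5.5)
The plan is to expand the conservation identity $\sum_e q_e'(\kap_e)\,h_e(\kap_e,\alpha_e,\beta_e)=0$ in the data $(v,r)$ at fixed $\kap$ and to read off the homogeneous components. Since $h_e$ is analytic, for fixed $\kap_e$ I write $h_e(\kap_e,\alpha,\beta)=\sum_{p,q\ge0}h_e^{(p,q)}(\kap_e)\,\alpha^p\beta^q$, convergent for small $(\alpha,\beta)$. Contrastivity, $h_e(\kap,\alpha,0)=h_e(\kap,0,\beta)=0$, kills every coefficient with $p=0$ or $q=0$, so only bidegrees with $p,q\ge1$ survive. Substituting $\alpha_e=d_e^{\mathsf T}x_0(\kap,v)$ and $\beta_e=d_e^{\mathsf T}y(\kap,r)$, the left side becomes a convergent double series $\sum_{p,q\ge1}\sum_e u_e(\kap_e)\,\alpha_e^p\beta_e^q$ with $u_e=q_e'h_e^{(p,q)}$, valid for $(v,r)$ in a neighbourhood of the origin.

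The key structural fact is that, at fixed $\kap$, $x_0(\kap,v)$ is linear in $v$ and $y(\kap,r)$ is linear in $r$; both come from linear Dirichlet/Neumann solves with the Laplacian. Hence the $(p,q)$ term $F_{p,q}(v,r)=\sum_e u_e\alpha_e^p\beta_e^q$ is a polynomial in $(v,r)$, homogeneous of degree $p$ in $v$ and $q$ in $r$. For the direction ``conserves $\Rightarrow$ layers vanish'', I replace $(v,r)$ by $(sv,tr)$ with scalars $s,t$. The identity becomes $\sum_{p,q}F_{p,q}(v,r)\,s^pt^q=0$, a convergent power series in $(s,t)$ near zero that vanishes identically. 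Therefore every coefficient vanishes, so $F_{p,q}(v,r)=0$. Because $v,r$ were arbitrary and each $F_{p,q}$ is a polynomial, it vanishes for all data, not just small data. This holds for every $\kap>0$, which gives \eqref{eq:layer}.

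For the converse, if every layer vanishes identically, then the series vanishes for small data. The left side $\sum_e q_e' h_e$ is analytic in $(v,r)$ wherever $h_e$ is analytic along the signals, so by analytic continuation it vanishes for all data. Alternatively, if $h_e$ is taken entire in its signal arguments, the series converges globally and no continuation is needed.

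I expect the main obstacle to be exactly this domain issue: analyticity of $h_e$ is stated only locally at zero signal, so the converse needs either global convergence or a connectedness and continuation argument over data space. I would handle it by noting that the data space $\mathbb{R}^{\dim v}\times\mathbb{R}^{\dim r}$ is connected and that $(v,r)\mapsto\sum_e q_e'h_e$ is real-analytic wherever $h_e$ is defined. A second subtlety is that distinct bidegrees cannot cancel against one another. The two-parameter scaling $(s,t)$ settles this, because the $v$-degree and the $r$-degree are recorded separately. This separate recording is why the lemma can split by bidegree $(p,q)$ rather than only by total degree.
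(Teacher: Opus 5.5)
Your proposal is correct and is essentially the paper's argument: because $\alpha$ is linear in $v$ and $\beta$ is linear in $r$, the $(p,q)$ Taylor layer is bihomogeneous of bidegree $(p,q)$ in the data, so distinct layers cannot cancel and each must vanish separately. Your two-parameter scaling $(sv,tr)$, your explicit use of contrastivity to discard the layers with $p=0$ or $q=0$, and your analytic-continuation argument for the converse (which needs $h_e$ analytic on the whole signal plane, not just near zero signal) spell out details the paper leaves implicit.
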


The learning content of a contrastive rule sits at $(1,1)$: layers even
in $\beta$ cannot follow the sign of the error, so a rule that trains
has $h^{(1,1)}\neq0$, and the standard rules are exactly bilinear. The
capacity question is therefore the structure of the solution space of
\eqref{eq:layer}, the \emph{identity space} of the circuit at each
layer, and chiefly at $(1,1)$.

\section{Conservation is local, and needs no linear elements}
\label{sec:tellegen}

The conserved mass of a local rule is not a property of the circuit as a
whole. It decomposes over the parts the input electrodes separate, and the
decomposition rests on so little that it survives arbitrary nonlinear
elements. Call the \emph{sectors} of the circuit the connected components of
the network with the input nodes deleted, each branch assigned to the sector
holding its non-input endpoints and branches joining two inputs assigned to
no sector. We avoid calling the sectors blocks: that word standardly
names a maximal biconnected subgraph, as in Ref.~\cite{cel1997}.
Write $E$ for the trainable branches and $F$ for any branch
carrying no trainable parameter, with convex $C^1$ laws on both.

\begin{lemma}[Localized Tellegen]
\label{lem:tellegen}
Suppose the free state satisfies Kirchhoff's current law with injection
permitted only at input nodes, and the feedback state $y$ vanishes at the
input nodes. Then for every sector $B$,
\begin{equation}
\sum_{b\in B} i_b(x_0)\,\Delta_b y=0,
\label{eq:localtellegen}
\end{equation}
with $i_b(x_0)$ the free branch current and $\Delta_b y$ the feedback drop.
\end{lemma}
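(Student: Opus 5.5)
The plan is to prove the identity by a direct Tellegen computation: rewrite the sum over branches as a sum over nodes, and show that every node term vanishes by one of the two hypotheses.

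\emph{Rewriting over nodes.} Fix a sector $B$ and orient each branch $b$ by its incidence column $d_b$, so that $\Delta_b y=d_b^{\mathsf T}y$. Summation by parts gives
\begin{equation}
\sum_{b\in B} i_b(x_0)\,\Delta_b y=\sum_{n} y_n \sum_{b\in B} (d_b)_n\, i_b(x_0),
\end{equation}
where $n$ runs over all nodes touched by branches of $B$. These nodes are of two kinds: the non-input nodes of the component $B$, and possibly some input nodes adjacent to it. No branch of $B$ reaches a non-input node outside $B$, since such a branch would connect two components of the network with the inputs deleted.

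\emph{Input nodes.} Their terms vanish because $y_n=0$ there; this is the feedback boundary condition.

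\emph{Non-input nodes of $B$.} Take such a node $n$. I would check that every branch incident to $n$, trainable in $E$ or fixed in $F$, lies in $B$. Its other endpoint is either an input node, or a non-input node in the same connected component, so the branch is assigned to $B$ by the sector convention. Hence $\sum_{b\in B}(d_b)_n i_b(x_0)$ is the full net current out of $n$ in the free state. Kirchhoff's current law, with no injection at non-input nodes (the outputs float in the free state), makes this net current zero. Every term therefore vanishes, and \eqref{eq:localtellegen} follows.

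\emph{Generality and the main obstacle.} The argument uses no property of the branch laws: $i_b(x_0)$ is whatever current the branch carries. Convexity and $C^1$ regularity matter only for the states to exist. The step I expect to need care is the bookkeeping of sector membership. One must confirm that branches joining two inputs, which belong to no sector, never appear at a non-input node. One must also confirm that no non-input node of $B$ is missing a branch assigned elsewhere. Both facts follow from the definition of $B$ as a connected component after deleting the inputs, together with the rule that assigns each branch to the sector holding its non-input endpoints.
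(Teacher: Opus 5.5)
Your proposal is correct and follows the paper's own proof essentially step for step: you regroup the branch sum by nodes, the input-node terms die because $y_n=0$, and at every non-input node of $B$ all incident branches lie in $B$. The node term there is therefore the full free net current, which vanishes by Kirchhoff's law since the outputs inject nothing in the free state. Your added checks on sector bookkeeping and on independence from the branch laws are consistent with the paper's remarks around the lemma.
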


\begin{proof}
Regrouping by node, $\sum_{b\in B}i_b\Delta_b y=\sum_n y_n J_n$ with $J_n$
the net free current leaving $n$ through the branches of $B$. At an input
node $y_n=0$. At any other node of $B$ every incident branch lies in $B$,
because $B$ is a full component of the circuit minus the inputs, so $J_n$
is the total net current there and vanishes for want of an injection.
\end{proof}

Only Kirchhoff's law and the boundary condition enter. No constitutive
relation appears anywhere in the proof, which is why the consequence below
holds for nonlinear elements while the classification of
Sec.~\ref{sec:classification} does not.

\begin{theorem}[Sector conservation, arbitrary element laws]
\label{thm:blockcons}
Let the energy be separable in the parameters,
$G(x;\kap)=\sum_{e\in E}g_e(\kap_e)\tilde G_e(\Delta_e x)
+\sum_{f\in F}\Phi_f(\Delta_f x)$, and let $f_e$ be an antiderivative of
$g_e/g_e'$. Under equilibrium propagation or coupled learning with per-edge
rates $w_e>0$, $\dot\kap_e=u_e/w_e$, the sector mass
$Q^B=\sum_{e\in E\cap B}w_ef_e(\kap_e)$ obeys
\begin{equation}
\frac{d}{dt}Q^B=-\sum_{f\in F\cap B} i_f(x_0)\,\Delta_f y .
\label{eq:leak}
\end{equation}
In particular $Q^B$ is exactly conserved whenever every branch of $B$ is
trainable, for arbitrary element laws.
\end{theorem}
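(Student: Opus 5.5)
The plan is to compute $dQ^B/dt$ directly from the rule, identify each trainable edge's contribution as a product of its free branch current and its feedback drop, and then close the sum over the sector with the localized Tellegen identity, Lemma~\ref{lem:tellegen}. The key is to pin down what the update $u_e$ is for EP/CL in the separable-energy model. In the small-nudge limit the EP gradient is $u_e=-\partial_{\kap_e}$ of the cross term, i.e.\ up to sign $u_e=-g_e'(\kap_e)\,\tilde G_e'(\Delta_e x_0)\,\Delta_e y$. It is obtained by differentiating $\partial_{\kap_e}G(x;\kap)=g_e'(\kap_e)\tilde G_e(\Delta_e x)$ along the nudge direction $y$ at $x_0$. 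Meanwhile the free branch current on a trainable edge is $i_e(x_0)=g_e(\kap_e)\,\tilde G_e'(\Delta_e x_0)$, and on a fixed branch $i_f=\Phi_f'(\Delta_f x_0)$. I will fix the sign convention so that $u_e=-(g_e'/g_e)\,i_e\,\Delta_e y$. For CL the same form holds, with $y$ the response to the DtN-premultiplied injection, as noted in Sec.~\ref{sec:setup}. I must also check that this $y$ is zero at the inputs, which holds since inputs are clamped in both nudged and free states.

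Then $\frac{d}{dt}Q^B=\sum_{e\in E\cap B}w_e f_e'(\kap_e)\,u_e/w_e=\sum_{e}(g_e/g_e')\,u_e$. The rates $w_e$ cancel exactly, which is why the choice of $f_e$ as an antiderivative of $g_e/g_e'$ is forced. Substituting gives $-\sum_{e\in E\cap B} i_e(x_0)\Delta_e y$. Lemma~\ref{lem:tellegen} says the sum of $i_b\Delta_b y$ over all branches $b\in B$, trainable and fixed, vanishes. The hypotheses hold: the free state is an energy minimizer with injection only at clamped inputs, so KCL holds at every non-input node whatever the convex $C^1$ branch laws, and $y$ vanishes at inputs by construction. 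Hence the trainable part equals minus the fixed part, which yields \eqref{eq:leak} up to the overall sign convention. I will reconcile that sign with the paper's orientation of $\Delta$ and of $u_e$; the statement's sign is reproduced if $u_e=+(g_e'/g_e)\,i_e\Delta_e y$. When $F\cap B=\emptyset$ the right side is empty, giving exact conservation.

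The main obstacle is the first step: justifying that the EP/CL update in the nonlinear setting really is the bilinear form $g_e'\tilde G_e'(\Delta_e x_0)\Delta_e y$. This requires identifying the feedback state $y$ with the linearized nudge response of the nonlinear circuit, i.e.\ the Hessian solve at $x_0$ with zero Dirichlet data at the inputs. I would derive it by differentiating the nudged equilibrium $x_\beta$ at $\beta=0$, taking $y=\partial_\beta x_\beta$. Only Kirchhoff's law at $x_0$ and the boundary condition on $y$ are then used, so no properties of the Hessian beyond existence of the derivative (implicit function theorem via strict convexity) are needed. A smaller point is the edge case $g_e'=0$. There $f_e$ is undefined, and I would simply assume $g_e'\neq0$ on the trainable range, as implicit in the statement.
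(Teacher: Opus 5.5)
Your proposal is correct and is the paper's proof: the rates cancel branch by branch, the choice of $f_e$ turns $f_e'u_e$ into $i_e(x_0)\,\Delta_e y$, and Lemma~\ref{lem:tellegen} splits the sector sum into its trainable and untrainable parts. The paper simply asserts the field formula that you derive by differentiating the nudged equilibrium, and your sign hedge is only a convention: the paper's $y$ is the response to injecting the error $Qr$, which equals $-\partial_\beta x_\beta$, so $u_e=+g_e'(\kappa_e)\tilde G_e'(\Delta_e x_0)\,\Delta_e y$ and the stated sign of \eqref{eq:leak} follows.
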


\begin{proof}
$\dot Q^B=\sum_{E\cap B}w_ef_e'\dot\kap_e=\sum_{E\cap B}f_e'u_e$, the
weights cancelling branch by branch. The field of either rule is
$u_e=g_e'(\kap_e)\tilde G_e'(\Delta_e x_0)\,\Delta_e y$, so
$f_e'u_e=i_e(x_0)\,\Delta_e y$, and Lemma~\ref{lem:tellegen} splits the
sector sum into its trainable and untrainable parts.
\end{proof}

Branches joining two inputs are frozen: both endpoints carry $y=0$, so
their field vanishes and they never move, their masses are constants,
and the sector masses account for all of the global mass of
Ref.~\cite{mcginnis2026conservation} that moves. For $F$ empty and $B$
the whole network, Theorem~\ref{thm:blockcons} is Lemma 3.2 of
Ref.~\cite{mcginnis2026conservation}; what is new here is the
localization, the leak formula \eqref{eq:leak}, and the count.

\begin{corollary}[Capacity and its ceiling]
\label{cor:ceiling}
Let $b$ count the sectors that are entirely trainable and carry field. Their
masses are $b$ functionals conserved simultaneously and independently,
since sectors are disjoint as branch sets and each sector's weights may be
chosen alone. If every branch has positive curvature at the free state then
\begin{equation}
b\;\le\;\#\{\text{output electrodes}\},
\end{equation}
because a sector holding no output has a feedback state solving a
homogeneous problem with zero boundary data, hence $y\equiv0$ and no field
on that sector, and the sectors are disjoint.
\end{corollary}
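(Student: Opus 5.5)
The corollary has two claims: the $b$ sector masses are conserved and independent, and $b$ is at most the number of output electrodes. I will prove them in that order, using Theorem~\ref{thm:blockcons} for the first and a uniqueness argument for the feedback problem for the second.

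\emph{Conservation and independence.} For a sector $B$ with $F\cap B=\emptyset$, the leak formula \eqref{eq:leak} has an empty right-hand side, so $Q^B$ is exactly conserved under EP or CL with any choice of rates $w_e>0$. Independence follows from disjointness. Each trainable branch lies in at most one sector, so the masses $Q^B$ depend on pairwise disjoint sets of coordinates $\kap_e$. I then show the gradients are linearly independent at every point. On its own sector, $\nabla Q^B$ has components $w_ef_e'(\kap_e)=w_eg_e/g_e'$. These are nonzero wherever $g_e\neq0$, which holds for the physical parametrizations; for instance, $g_e=\kap_e$ gives $f_e'=\kap_e>0$. The gradients therefore have disjoint nonempty supports, which makes them independent. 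Because each $w_e$ enters only $Q^B$ for its own $B$, the weights on different sectors can also be chosen separately. The qualifier ``carry field'' removes the degenerate case in which $Q^B$ is conserved only because nothing in $B$ moves.

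\emph{The ceiling.} Let $B$ be a sector containing no output node. The feedback state $y$ minimizes the energy $G(y;\kap)$ subject to $y=0$ at the inputs, with current $Qr$ injected at the outputs. Restrict to the nodes of $B$ that are not inputs. Since $B$ is a full component of the circuit with inputs deleted, each such node has all its branches inside $B$, and the other endpoint of each branch is either in $B$ or an input, where $y=0$. No current is injected at these nodes. Hence $y|_B$ solves a Kirchhoff problem on $B$ with zero Dirichlet data on its input boundary and zero Neumann data everywhere else.

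The main step is showing that this problem forces $y\equiv0$ on $B$. In the linearized feedback problem around $x_0$, the relevant quadratic form is $\sum_{b\in B}c_b(\Delta_b y)^2$, where $c_b>0$ is the branch curvature at the free state, positive by hypothesis. I take the Tellegen-type pairing of the current law with $y$ itself, which is the proof of Lemma~\ref{lem:tellegen} with $x_0$ replaced by $y$. Because $y$ vanishes at the inputs and the interior currents balance, the pairing gives $\sum_{b\in B}c_b(\Delta_b y)^2=0$. Every $\Delta_b y$ is then zero on $B$. The sector is connected, and if it touches an input, where $y=0$, then $y\equiv0$ on all of $B$.

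A sector that touches no input would be a component of the whole circuit disjoint from the inputs. The circuit is connected, so this can only happen when there are no inputs at all, and that case is excluded. With $y\equiv0$ on $B$, every drop $\beta_e=\Delta_e y$ is zero there. The field $u_e\propto\Delta_e y$ therefore vanishes, and $B$ carries no field. It follows that each sector counted by $b$ contains at least one output node. Output nodes are not input nodes, so each belongs to exactly one sector. The map from counted sectors to outputs obtained by picking one output in each sector is therefore injective, which gives $b\le\#\{\text{output electrodes}\}$.

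I expect the uniqueness step to be the only delicate point. The care is in making precise which problem $y$ solves for nonlinear branch laws: either the small-nudge linearization, where the curvature hypothesis gives a positive definite quadratic form, or strict convexity of the feedback energy restricted to $B$ with zero boundary data. Either version makes the minimizer unique and equal to $0$.
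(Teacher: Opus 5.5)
Your proposal is correct and follows the paper's own argument: conservation comes from the empty leak sum over $F\cap B$, independence comes from the disjointness of sectors as branch sets, and the ceiling comes from the fact that an output-free sector's feedback problem is homogeneous with zero Dirichlet data, so $y\equiv0$ there and each field-carrying sector must hold its own output. Your pairing of $y$ with itself through the localized Tellegen regrouping, using the positive curvatures, is an explicit version of the uniqueness step that the paper states in one line.
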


The count is therefore topological, fixed by where the electrodes sit
rather than by how the rule is built, and capped by the number of outputs.
It also settles where fixed nonlinear components may be placed: by
\eqref{eq:leak} an untrainable branch can destroy the conserved mass of
its own sector and of no other, so confining a circuit's diodes to
designated sectors leaves every remaining sector's memory exactly
protected.

The ceiling bounds the number of \emph{sector}
masses, not the number of independent conserved masses a circuit can
carry. Sectors are not the finest set on which the pairing localizes: when
the element laws carry offsets, so that $\tilde G_e$ is minimized away
from zero drop, a circuit can hold more independent exactly conserved
masses than it has outputs. A four-node example has inputs $\{0,1\}$,
output $\{3\}$, branches $(0,2)$, $(1,2)$ and a parallel pair $(2,3)$,
with laws $(\Delta-\ell_e)^2/2$ and $\ell=(0,0,1,-1)$; it carries two
independent conserved masses and one output. Corollary~\ref{cor:ceiling}
is unaffected, because those four branches form a single sector, but the
bound should not be read as a bound on conservation itself.

The exact statement of the localization is a separation
condition. Writing $\mathcal{F}$ for the space of free currents obeying
Kirchhoff's law with injection only at the inputs and
$\mathcal{T}=\mathcal{F}^{\perp}$ for the space of feedback drops
vanishing on the inputs, the sum $\sum_{b\in S}i_b\Delta_b y$ vanishes for
all $i\in\mathcal{F}$ and all $y$ if and only if $\mathcal{F}$ splits as
$(\mathcal{F}\cap\mathbb{R}^S)+(\mathcal{F}\cap\mathbb{R}^{E\setminus S})$,
that is, if and only if $S$ separates the matroid represented by
$\mathcal{F}$. Sectors are such sets, since a cycle cannot pass twice
through the identified input node, and so are the branches joining two
inputs, which become loops there. The corresponding statement for the
unconstrained network, where the answer is a union of biconnected
components, is given in Ref.~\cite{cel1997}; the two differ precisely because
the boundary condition at the inputs contracts them, and the difference is
visible already on the four-cycle $u_1,a,u_2,b$ with inputs $\{u_1,u_2\}$,
which is one biconnected component while our two branches through $a$ and
through $b$ localize separately. Because the argument uses only the
orthogonality of the two spaces and never the incidence matrix, it applies
unchanged to any system with a static-kinematic duality, in particular to
central-force spring networks, where the incidence structure is the
rigidity matrix.

\begin{proposition}[Rates choose the basis and never enlarge it]
\label{prop:basis}
For fixed $w$ the functional $\sum_e c_e f_e(\kap_e)$ is conserved
whenever $c/w$ is constant on each sector. In linear circuits
Theorem~\ref{thm:blocks} supplies the converse, so each entirely
trainable sector then contributes exactly one conserved functional up to
scale; with offset element laws the pairing can localize more finely, as
above, and a sector may carry several. In every case $b$ does not depend
on $w$: rate modulation selects which functionals are conserved, never
how many sectors carry one.
\end{proposition}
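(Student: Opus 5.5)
The plan is to reduce the forward direction to Theorem~\ref{thm:blockcons} and the sector decomposition, take the converse from Theorem~\ref{thm:blocks}, and read off the $w$-independence of $b$ from the definition of a sector. For fixed $w$ the rule is $\dot\kap_e=u_e/w_e$. For a candidate functional $\sum_e c_e f_e(\kap_e)$ I differentiate along the flow:
\begin{equation*}
\frac{d}{dt}\sum_e c_e f_e(\kap_e)=\sum_e \frac{c_e}{w_e}\,f_e'(\kap_e)\,u_e=\sum_e \frac{c_e}{w_e}\,i_e(x_0)\,\Delta_e y .
\end{equation*}
The last equality uses $f_e'u_e=i_e(x_0)\Delta_e y$, exactly as in the proof of Theorem~\ref{thm:blockcons}.

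Now suppose $c_e/w_e=\lambda_B$ on each sector $B$. Grouping the sum by sector gives $\sum_B\lambda_B\sum_{e\in B}i_e\Delta_e y$. For an entirely trainable sector, the inner sum vanishes by Lemma~\ref{lem:tellegen}. Frozen branches joining two inputs contribute nothing, because $\Delta_e y=0$ there, so their coefficients are unconstrained. This proves the sufficiency claim. Equivalently, the functional is the combination $\sum_B\lambda_B Q^B$ of the sector masses of Theorem~\ref{thm:blockcons}, plus constants.

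For the converse in linear circuits, the space of conserved separable functionals at layer $(1,1)$ is the solution space of \eqref{eq:layer}. Its linear slice is identified by Theorem~\ref{thm:blocks} as spanned by the sector masses, together with the degenerate series, parallel and frozen classes. Restricting to functionals of the form $\sum_e c_e f_e$, any conserved one must satisfy $\sum_e (c_e/w_e)\,i_e\Delta_e y\equiv0$. Theorem~\ref{thm:blocks} then forces $c/w$ to be constant on each field-carrying sector. So each entirely trainable sector contributes a one-dimensional space, its $Q^B$ up to scale.

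The main obstacle is this step: checking that the statement of Theorem~\ref{thm:blocks} applies after replacing the coefficient vector by $c/w$. The change is harmless, because the map $c\mapsto c/w$ is a diagonal invertible rescaling, and the rate-modulated identity $\sum_e (c_e/w_e)\,\alpha_e\beta_e=0$ is precisely the $(1,1)$ layer condition with $u_e=c_e/w_e$. Series and parallel degeneracies live in other degree slices by the homogeneity grading, so they do not enter here.

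For the offset case I would simply cite the four-node example preceding the proposition, which shows a single sector carrying two conserved masses.

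Finally, $b$ is independent of $w$. Sectors are defined purely from the graph and the input set, and whether a sector is entirely trainable depends only on the partition $E\cup F$. Whether it carries field depends on $y$ and $x_0$, which are states of the circuit at the current $\kap$ and do not involve $w$. The number $b$ therefore cannot change with $w$, while the particular conserved combination $\sum_{e\in B}w_e f_e$ does. This completes the claim that rates select the basis but never the count.
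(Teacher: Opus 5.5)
Your route is the one the paper intends. Sufficiency comes from the computation in the proof of Theorem~\ref{thm:blockcons} together with Lemma~\ref{lem:tellegen}, the converse from Theorem~\ref{thm:blocks}, and the $w$-independence of $b$ from the fact that sectors are fixed by the graph and the electrodes.

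The argument goes through, but the paragraph you single out as the main obstacle misstates the one point that matters. In a linear circuit $f_e'=\kap_e$ and $i_e=\kap_e\alpha_e$. Conservation therefore reads $\sum_e (c_e/w_e)\,\kap_e\alpha_e\beta_e=0$, so the layer coefficient is $u_e=(c_e/w_e)\kap_e$, not $u_e=c_e/w_e$. Your earlier line with $i_e\Delta_e y$ already has this right.

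The factor $\kap_e$ is what puts the vector in the degree-one slice, where Theorem~\ref{thm:blocks} applies verbatim. The $w_e$ in that theorem is just a name for an arbitrary coefficient vector, so there is no rescaling to justify. Since $(c_e/w_e)\kap_e$ is already homogeneous of degree one, no appeal to the grading is needed either.

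Read literally, $u_e=c_e/w_e$ lies in the constant slice, home of the parallel-class identities $u_e=c$, $u_f=-c$. There Theorem~\ref{thm:blocks} says nothing, and a sector-constant vector is not even an identity: the Tellegen pairing needs the weight $\kap_e$. For the same reason, your description of the linear slice as the sector span ``together with the degenerate series, parallel and frozen classes'' is wrong. On live edges the linear slice is exactly the sector span. The series and parallel families sit in degrees two and zero, and dead-edge coefficients are free in every slice.

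Three smaller points:
\begin{itemize}
\item Your sufficiency computation closes only on entirely trainable sectors. On a sector with untrainable branches it returns $\lambda_B$ times the leak in \eqref{eq:leak}, so state that hypothesis rather than concluding the claim outright.
\item Matching your ``field-carrying sector'' to the theorem's input-separated live sector needs a line of argument. The live edges of a sector containing an output are exactly its free-live edges, and these form a single piece after splitting at the inputs. The sector's remaining edges are dead with free coefficients, so the one-dimensionality holds modulo functionals of frozen conductances.
\item Base the $w$-independence of $b$ on the structural witness-path characterization of liveness, not on the states ``at the current $\kap$''. The trajectory $\kap(t)$ itself depends on $w$.
\end{itemize}
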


The number itself is specific to the circuit:
Corollary~\ref{cor:ceiling} ties it to the electrode pattern, and it
has no counterpart for an optimizer, which has no locality budget.

\begin{proposition}[The adjoint rule is the exception]
\label{prop:al}
Adjoint coupled learning clamps the outputs in the very state whose
currents enter \eqref{eq:localtellegen}, so its stationarity admits an
injection $\mu$ at the outputs there and Lemma~\ref{lem:tellegen} does
not apply.
Instead
\begin{equation}
\frac{d}{dt}Q^B=-\sum_{f\in F\cap B}i_f\,\Delta_f y
\;-\!\!\sum_{n\in O\cap B}\!\!\mu_n^2 ,
\end{equation}
so every sector carrying an output drains at a rate set by the squared
output multipliers it holds. Summing over sectors returns the global
dissipation law $\dot K=-2\Phi^*$, twice the rule's own loss
\cite{dangol2026memory}, of which this is the localization.
Here the sector $B$ comprises all its branches, including those dead
for the free state, not only the live subnetwork; the law fails on the
live subnetwork alone, because the omitted branches are exactly the ones
the adjoint rule moves.
\end{proposition}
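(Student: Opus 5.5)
The plan is to rerun the proof of Theorem~\ref{thm:blockcons} line by line and locate the single place where the adjoint rule departs from it: the node regrouping inside Lemma~\ref{lem:tellegen}. Write $x_c$ for the state whose branch currents enter the update of adjoint coupled learning. In $x_c$ the inputs are clamped by $P^{\mathsf T}x=v$ and the outputs by $Q^{\mathsf T}x=\tau$. Stationarity of $G$ under these two constraints therefore reads $\nabla_xG(x_c)=P\lambda+Q\mu$, so current may be injected at the input nodes (multipliers $\lambda$) and at the output nodes (multipliers $\mu$). The feedback state $y$ still vanishes at the inputs. I will take from the definition of the adjoint rule that its feedback data at the outputs are the multipliers themselves, $y_n=-\mu_n$ for $n\in O$, with the sign fixed by requiring the rule to descend its loss. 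Pinning this identification and its sign convention against the definition in \cite{dangol2026memory} is the step I expect to be the main obstacle. Everything downstream is bookkeeping, but the sign of the drain and the factor in $\dot K=-2\Phi^*$ both hinge on it.

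Granting that identification, the first step repeats the argument of Theorem~\ref{thm:blockcons} verbatim. The weights cancel branch by branch, and $f_e'u_e=i_e(x_c)\,\Delta_e y$, which uses only the separable form of $G$ and not linearity. This gives
\[
\frac{d}{dt}Q^B=\sum_{b\in B}i_b(x_c)\,\Delta_by-\sum_{f\in F\cap B}i_f(x_c)\,\Delta_fy .
\]
The second step regroups the full sector sum by node, $\sum_{b\in B}i_b\Delta_by=\sum_n y_nJ_n$, and splits the nodes of $B$ into three kinds:
\begin{itemize}
\item Input nodes contribute nothing, since $y_n=0$ there.
\item Interior non-output nodes contribute nothing, since every incident branch lies in $B$ and KCL gives $J_n=0$.
\item At an output node of $B$, again every incident branch lies in $B$, so $J_n$ equals the total injection $\mu_n$, and the term is $y_n\mu_n=-\mu_n^2$.
\end{itemize}
This yields the displayed law.

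The regrouping needs \emph{every} branch incident to a non-input node of $B$. That is why $B$ must be the whole sector, including the branches that carry no current in the free state. If those branches are dropped, $J_n$ at their endpoints is no longer the full nodal balance. I will check on a small example that the omitted branches carry nonzero $\Delta y$ while $i(x_c)\neq0$, which shows that the law genuinely fails on the live subnetwork alone.

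For the global statement, I sum over sectors. Branches joining two inputs contribute zero, since both endpoints have $y=0$. With $F$ empty this gives $\dot K=-\sum_{n\in O}\mu_n^2$. This sum is identified with $2\Phi^*$ by the expression of the adjoint rule's loss as $\tfrac12\|\mu\|^2$ from \cite{dangol2026memory}, so the sector law localizes that global dissipation law.
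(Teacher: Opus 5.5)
Your proposal is correct and follows the same route the paper sketches. You rerun the node regrouping of Lemma~\ref{lem:tellegen} in the output-clamped state, where stationarity admits the injection $\mu$; each output node of $B$ then contributes $y_n\mu_n=-\mu_n^2$, the frozen branches contribute the leak term, and summing over sectors gives $\dot K=-\|\mu\|^2=-2\Phi^*$ with $\Phi^*=\tfrac12\|\mu\|^2$. The identification $y_O=-\mu$ that you flag is the one the statement presupposes (it is the adjoint state obtained when differentiating $\tfrac12\|\mu\|^2$), and your reason for keeping the free-dead branches in $B$ (every branch at a non-input node must enter the nodal balance) matches the paper's.
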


Sector conservation is thus available to the two rules that measure against
the free state and to no rule that clamps its outputs first. The adjoint
rule is not left with nothing, however: what it loses is the sector budget,
while the series and parallel invariants of
Sec.~\ref{sec:classification} survive it. Their defining components
contain no terminal, so the extra output injection lies outside them and
the same Kirchhoff argument applies in both of the adjoint rule's probe
states.

\section{The identity space}
\label{sec:identity}

\subsection{Bootstrap: the space is finite dimensional}

The identity \eqref{eq:layer} quantifies over functions
$u_e:\mathbb{R}_{>0}\to\mathbb{R}$, one per edge. The network response
removes this freedom.

\begin{lemma}[Bootstrap]
\label{lem:bootstrap}
Let edge $e_0$ be live at layer $(p,q)$, meaning
$\alpha_{e_0}^{\,p}\beta_{e_0}^{\,q}$ is not identically zero. If the
functions $u_e$ satisfy \eqref{eq:layer}, then $u_{e_0}$ is a
polynomial of degree at most $p+q$. In particular, at the learning
layer every $u_e$ on a live edge is a quadratic polynomial in
$\kap_e$.
\end{lemma}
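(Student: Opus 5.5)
Fix all conductances except $\kap_{e_0}=s$ and regard \eqref{eq:layer} as an identity in the single variable $s>0$. The plan is to show that the drops $\alpha_{e_0},\beta_{e_0}$ and all other drops are rational functions of $s$ of very restricted form (Sherman--Morrison), and then to solve \eqref{eq:layer} for $u_{e_0}(s)$. This expresses it as a ratio in which the $s$-dependence can be bounded.

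First, the rank-one structure. Write $L(\kap)=L_0+s\,d_{e_0}d_{e_0}^{\mathsf T}$, where $L_0$ collects the other edges, and work with the grounded (input-clamped) reduced Laplacian. Sherman--Morrison applied to both probe problems gives
\[
x_0(s)=a-\frac{s\,c}{1+s\rho},\qquad y(s)=b-\frac{s\,g}{1+s\rho}.
\]
Here $a,b,c,g$ are independent of $s$ and $\rho$ is the effective resistance seen by $e_0$. This form holds when $L_0$ is nonsingular; the degenerate case, where $e_0$ is a bridge into a sector, will be handled by continuity or by a separate grounded-sector argument. Consequently
\[
\alpha_{e_0}=\frac{\alpha^0}{1+s\rho},\qquad \beta_{e_0}=\frac{\beta^0}{1+s\rho},
\]
and every other drop has the form $\alpha_e=\alpha_e^0-\frac{s\,\gamma_e}{1+s\rho}$, with $\gamma_e$ proportional to $\alpha^0$. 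Multiplying \eqref{eq:layer} by $(1+s\rho)^{p+q}$, the $e_0$ term becomes $u_{e_0}(s)\,(\alpha^0)^p(\beta^0)^q$. Each term with $e\neq e_0$ becomes $u_e(\kap_e)$, a constant in $s$ since $\kap_e$ is fixed, times a polynomial in $s$ of degree at most $p+q$. This is because $(1+s\rho)\alpha_e=\alpha_e^0+s(\rho\alpha_e^0-\gamma_e)$ is affine in $s$.

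Liveness of $e_0$ means $\alpha_{e_0}^p\beta_{e_0}^q\not\equiv 0$, so there exist data $(v,r)$ and other conductances with $(\alpha^0)^p(\beta^0)^q\neq0$. Fix such data. Dividing by this nonzero constant gives $u_{e_0}(s)=-\sum_{e\neq e_0}u_e(\kap_e)\,\pi_e(s)$, where each $\pi_e$ is a polynomial of degree at most $p+q$. Hence $u_{e_0}$ is a polynomial of degree at most $p+q$ for all $s>0$. At $(p,q)=(1,1)$ this gives a quadratic.

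The main obstacle is making Sherman--Morrison legitimate: the other conductances must be held fixed at values where $\rho$ is finite and where the data witnessing liveness still give $\alpha^0\beta^0\neq0$. I would use analyticity. The function $(\alpha_{e_0}^p\beta_{e_0}^q)(1+s\rho)^{p+q}$ is a nonzero polynomial in the data and rational in the conductances, so it is nonvanishing on a dense open set, and a generic choice works. If $e_0$ is a cut edge and $\rho$ is infinite, the drop formulas degenerate. In that case $\alpha_{e_0},\beta_{e_0}$ scale like $1/s$ times fixed currents, and the same multiplication by $s^{p+q}$ gives the bound.
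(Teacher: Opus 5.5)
Your proposal is correct and follows the paper's proof: you freeze the other conductances, use the rank-one (Sherman--Morrison) dependence on $\kappa_{e_0}$ to put every drop over the common denominator $1+s\rho$ with a constant numerator on $e_0$ itself, clear denominators with $(1+s\rho)^{p+q}$, and solve for $u_{e_0}$ at a liveness witness. Two minor remarks: the paper works with the bordered system, whose right-hand side does not depend on $\kappa_{e_0}$ even when $e_0$ touches an input, whereas in your grounded reduction the free right-hand side then acquires an $s$ term (your stated form still holds, but you did not show it); and your cut-edge case is actually vacuous, since a bridge stranding an input-free piece carries no free current and so has $\alpha_{e_0}\equiv 0$.
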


\begin{proof}
Fix all conductances except $z:=\kap_{e_0}$, so
$L(z)=L_0+z\,d_{e_0}d_{e_0}^{\mathsf T}$. Both probes solve the same
bordered system with matrix $K(z)=K_0+z\,\hat g\hat g^{\mathsf T}$,
where $\hat g$ embeds $d_{e_0}$ in the bordered coordinates. The
$z$ dependence of $K$ is rank one, so by Cramer's rule every drop is a
ratio of polynomials of degree at most one over the common denominator
$D(z)=\det K(z)$, itself of degree at most one. When $K_0$ is
invertible, Sherman--Morrison gives
$D(z)\propto1+\sigma z$ with
$\sigma=\hat g^{\mathsf T}K_0^{-1}\hat g>0$, drops
$d_e^{\mathsf T}x(z)=(a_e+b_ez)/(1+\sigma z)$, and a constant numerator
for the modified edge's own drop,
$d_{e_0}^{\mathsf T}x(z)=a_{e_0}/(1+\sigma z)$; when $e_0$ is a bridge
whose removal strands an input-free piece, $\det K_0=0$,
$D(z)\propto z$, and the same degree count holds with
$d_{e_0}^{\mathsf T}x(z)=a_{e_0}/(cz)$, again a constant numerator.
Multiplying \eqref{eq:layer} by $D(z)^{p+q}$ makes every term with
$e\neq e_0$ a polynomial in $z$ of degree at most $p+q$, with
coefficients independent of $z$, while the $e_0$ term becomes
$u_{e_0}(z)$ times a constant that is nonzero for generic data and
frozen conductances because $e_0$ is live. Solving for $u_{e_0}(z)$
exhibits it as a polynomial of degree at most $p+q$ whose coefficients
depend on the frozen variables; since $u_{e_0}$ is a function of $z$
alone, the coefficients are constants.
\end{proof}

\begin{corollary}[Decidability]
\label{cor:decidable}
For a fixed circuit the identity space at layer $(1,1)$ is the null
space of a linear system in the $3M$ unknowns
$u_e(\kap)=A_e+B_e\kap+C_e\kap^2$, each rational sample
$(\kap,v,r)$ contributing one exact linear constraint. Sampling only
shrinks the null space, so the exact null space of any sampled system
is a rigorous upper bound on the identity space, and the circuit's
conservation capacity is computable in rational arithmetic.
\end{corollary}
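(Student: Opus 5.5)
The plan is to assemble the corollary from Lemma~\ref{lem:layers} and Lemma~\ref{lem:bootstrap}, and then verify the finiteness and soundness claims one at a time.

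\emph{Step 1: reduce to a finite ansatz.} Specialize Lemma~\ref{lem:layers} to $(p,q)=(1,1)$, where the identity reads $\sum_e u_e(\kap_e)\alpha_e\beta_e\equiv0$. Lemma~\ref{lem:bootstrap} makes every live $u_e$ a polynomial of degree at most two, $u_e=A_e+B_e\kap_e+C_e\kap_e^2$. On a dead edge the product $\alpha_e\beta_e$ vanishes identically, so $u_e$ never enters the identity. Such an edge contributes a free, undetermined function, and it is convenient to record it in the same three coefficients as well. That edge is then accounted for separately as a frozen direction, consistent with the later classification of feedback-dead edges. The identity space at $(1,1)$ is thereby the set of coefficient vectors $(A,B,C)\in\mathbb{R}^{3M}$ on which the identity holds.

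\emph{Step 2: write the finite linear system.} For fixed $(\kap,v,r)$, solve the two bordered systems from the proof of Lemma~\ref{lem:bootstrap}. This gives $\alpha_e,\beta_e$ as ratios of polynomials with rational coefficients, so a rational sample yields rational numbers $\alpha_e\beta_e$. The identity evaluated at that sample is
\[
\sum_e\bigl(A_e+B_e\kap_e+C_e\kap_e^2\bigr)\alpha_e\beta_e=0,
\]
which is one linear equation in the $3M$ unknowns with rational coefficients. The identity space is the intersection of the kernels of all such equations over $\kap>0$ and all data. Any finite sample set therefore gives a superset of it, which is the monotonicity claim that sampling only shrinks the null space toward the true space, and which yields the rigorous upper bound.

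\emph{Step 3: decidability.} For computability we also need that finitely many samples reach the exact space. Clear denominators by multiplying by $\det K_0(\kap)^2$, so the identity becomes a polynomial identity in $(\kap,v,r)$ whose coefficients are linear in the unknowns. Expanding in monomials gives finitely many linear equations, all computed in exact rational arithmetic, and their common null space is exactly the identity space. Polynomials vanishing on the open set $\kap>0$ vanish identically, so nothing is lost by this expansion. Equivalently, a sample set that is generic on a grid of size exceeding the degree bound determines the space. This degree bound follows from Cramer's rule on the bordered matrix.

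The only delicate point is Step 3, namely checking that clearing denominators preserves the equivalence. The denominator is a nonvanishing determinant on $\kap>0$, since it is a sum of spanning-forest products with positive weights. The remaining steps are bookkeeping on top of the bootstrap lemma.
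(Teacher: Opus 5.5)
Your proposal is correct and takes essentially the paper's route: the corollary follows from the bootstrap lemma (quadratic ansatz on live edges, dead-edge coefficients left free), each rational sample gives one exact linear row, and every sampled kernel contains the true identity space. Your Step 3 makes explicit the finiteness behind ``computable'' that the paper leaves implicit, relying in practice on matching the sampled upper bound against the explicit lower-bound families: you clear the common denominator $\det K(\kappa)^2$ and then expand in monomials, or sample a grid beyond the per-variable degree. Here $K(\kappa)$ must be the full bordered matrix, whose determinant is a positive grounded-forest sum and so nonvanishing on $\kappa>0$; it is not the bootstrap's $K_0$ with $e_0$ removed, which can be singular.
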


\subsection{The identity families}

Four mechanisms produce identities (Fig.~\ref{fig:classification}),
and each is an exact statement
about the circuit rather than about a particular rule. Call an edge
\emph{live} when both its drops can be nonzero, and define the
\emph{input-separated sectors} as the connected components of the live
subnetwork after splitting the graph at the input nodes.

\emph{Localized Tellegen.} The free state satisfies
$L(\kap)x_0=P\lambda$, the nudge response satisfies
$P^{\mathsf T}y=0$, and the global pairing vanishes:
\begin{equation}
\sum_e \kap_e\,\alpha_e\beta_e
 = \ip{y}{L(\kap)\,x_0} = \ip{P^{\mathsf T}y}{\lambda} = 0 .
\label{eq:tellegen}
\end{equation}
The pairing in fact vanishes sector by sector. For a sector $B$,
\begin{equation}
\sum_{e\in B} \kap_e\,\alpha_e\beta_e
 = \sum_{n} y_n\, I_n^B,
\end{equation}
with $I_n^B$ the net free current into node $n$ through $B$,
and every term dies: at input nodes $y_n=0$, and at any other node of
$B$ the free-state Kirchhoff law has no injection while every
current-carrying edge at $n$ belongs to $B$. An edge at a
non-input node that carries free current is alive in the free probe,
and it is then alive in the nudge probe as well, because its component
of the graph minus the inputs contains $B$'s output, so appending the
path from the output to the edge's input--input witness turns the
witness into an output-to-input path through the edge. A
free-current-carrying
edge at $n$ is therefore live, hence in $B$ by connectedness. Each sector thus
contributes the identity $u_e=s_B\kap_e$ on $B$ with its own constant:
$b$ sectors give a $b$-dimensional family. In rule terms, a bilinear
rule $h_e=g_e(\kap_e)\alpha_e\beta_e$ conserves the per-sector masses
$Q_g^B=\sum_{e\in B}\!\int\!\kap/g_e(\kap)\,d\kap$ separately; for
standard EP this says the conductance mass of every grounded sector is
conserved on its own, a refinement of the aggregate law of
Ref.~\cite{mcginnis2026conservation}, and for $g_e=1/w_e$ it gives the
designed masses of Theorem~\ref{thm:dichotomy}, one per sector.

\emph{Feedback-dead edges.} On every component of the graph minus the
input nodes that contains no output, the nudge response is harmonic
with zero boundary data and no injection, hence zero, and an edge
joining two inputs has both endpoints clamped; such edges have
$\beta_e\equiv0$. Dually, edges in pieces attached through a single
node with no input inside carry no free current, $\alpha_e\equiv0$. A
dead edge never moves under a contrastive rule that measures against the
free state, every function of its conductance is then conserved, and its
coefficient in \eqref{eq:layer} is unconstrained. The adjoint rule is
again the exception: its reference state clamps the outputs, so it drives
current through edges that are $\alpha$-dead and $\beta$-live and trains
them, and those coefficients are not free for it. Freezing chosen edges by fiat
\cite{chatterjee2025remembrance} manufactures exactly this branch.

\emph{Series classes.} Two live edges $e,f$ are \emph{series} when, in
each probe's current-carrying subgraph, deleting both leaves a
component that contains exactly one endpoint of each and no injecting
node: no input for the free probe, no terminal for the nudge. All of
that component's current then enters through $e$ and $f$, and Kirchhoff
over the component forces $\kap_e\alpha_e=\pm\kap_f\alpha_f$ and
likewise for $\beta$. The two probes' injection-free components are in
fact the same set of nodes, because an edge carries free current
exactly when it lies on a simple path between two inputs and nudge
current exactly when it lies on a simple output-to-input path with
input-free interior, so any path joining the two components' shared
endpoints is live in both probes and neither component can extend
beyond the other. The two signs therefore agree and cancel in the
product:
$\kap_e^2\alpha_e\beta_e=\kap_f^2\alpha_f\beta_f$, and
$u_e=c\kap_e^2$, $u_f=-c\kap_f^2$ is an identity. A non-terminal
node of degree two is the local special case, the injection-free
component being the node itself, but series partners need not be
adjacent: any two-edge cut isolating an injection-free region pairs its
crossing edges. Under EP the conserved functional is the cubic-mass
difference $\tfrac13(\kap_e^3-\kap_f^3)$: the higher-order polynomial
conservation laws anticipated for series-parallel topologies in
Ref.~\cite{mcginnis2026conservation} exist, and this is their form. A
series class of $\ell$ edges contributes $\ell-1$ dimensions. The
per-probe condition has teeth in both directions: an edge that is dead
at the learning layer but carries one probe's current spoils that
probe's sharing at a chain node, while a region containing an output
still supports free-current sharing, since outputs inject nothing in
the free state, and contributes to the sector structure instead.

\emph{Parallel classes.} Live edges sharing both endpoints share both
drops, so $u_e=c$, $u_f=-c$ is an identity and EP conserves the
difference $\kap_e-\kap_f$; a class of $\ell$ edges contributes
$\ell-1$ dimensions. Only coincident edges qualify: a series chain
whose composite is parallel to an edge produces cross terms that are
not separable, and no identity.

\begin{figure*}
\includegraphics[width=\textwidth]{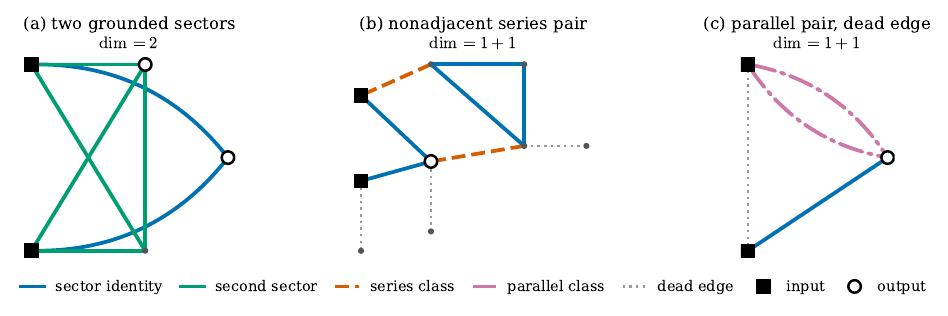}
\caption{The identity families on three example circuits (filled
squares: inputs; open circles: outputs). (a) A five-node circuit whose
live subnetwork splits at the inputs into two sectors, each carrying a
private Tellegen identity: even standard EP conserves the two sector
masses separately. (b) The two series-class edges (dashed) form a
two-edge cut isolating a terminal-free region and share current in
both probes without being adjacent; dotted edges are feedback-dead.
(c) A parallel pair (dash-dotted) shares both drops; the edge joining
the two inputs is dead. Each panel's dimension follows
Eq.~(\ref{eq:classification}).}
\label{fig:classification}
\end{figure*}

\section{Classification by exact certificate}
\label{sec:classification}

The families above are the lower bound; Corollary~\ref{cor:decidable}
supplies the upper bound circuit by circuit. For each circuit we draw
rational conductances and data, build the exact constraint rows over
the field of fractions, and compute the null space by exact Gaussian
elimination. The entire path from Kirchhoff solve to null-space
dimension uses rational arithmetic only, so the output is an integer
free of tolerance choices, and it reproduces identically on any
platform.

We ran the certificate on every connected simple graph on three, four,
and five vertices, with every placement of two input nodes and one or
two output nodes, counted up to isomorphism of the graph together with
its terminal sets, 502 terminal-configured circuits; on 300 random
circuits with six to nine nodes and random terminal placements; and on
named larger examples including wheels, complete graphs, a multigraph
with a parallel pair, and a twelve-edge random circuit. In every
configuration the exact dimension equals
\begin{equation}
\dim \;=\; b \;+\!\! \sum_{\text{series classes}}\!\!(\ell-1)
\;+\!\!\sum_{\text{parallel classes}}\!\!(\ell-1),
\label{eq:classification}
\end{equation}
where $b$ is the number of input-separated live sectors, with dead edges
contributing their unconstrained coefficients. No further identity
exists in any tested circuit. The enumerated family carries at most two
outputs, so it exhibits $b\le2$; as a witness beyond it, a ten-node
circuit of four disjoint lobes between the inputs, one output per lobe,
certifies $\dim=b=4$ exactly, with all four sector masses conserved
simultaneously in dynamics while the conductances move by up to $1.2$
times their initial values.

\begin{theorem}[Certified classification]
\label{thm:classification}
For every circuit in the certified family, the $(1,1)$ identity space
restricted to live edges is spanned by the sector Tellegen identities
and the series- and parallel-class families: its dimension is
\eqref{eq:classification}. In particular, when the live subnetwork is
input-connected and free of series and parallel degeneracy, the space
is one-dimensional.
\end{theorem}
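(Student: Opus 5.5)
The proof has two halves that meet at the formula \eqref{eq:classification}. The lower bound is that the listed families are identities and that they are linearly independent. The upper bound is that nothing else survives, certified circuit by circuit through Corollary~\ref{cor:decidable}.

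\emph{Lower bound.} By Lemma~\ref{lem:bootstrap}, every coefficient on a live edge is a quadratic $u_e=A_e+B_e\kap_e+C_e\kap_e^2$, so an identity is a vector in $\mathbb{R}^{3M}$ restricted to live edges. The sector Tellegen identities of Sec.~\ref{sec:identity} are supported in the $B$-slots, with one free scale per input-separated live sector, and there are $b$ of them. The series-class identities live purely in the $C$-slots: a class of $\ell$ edges gives the differences $c(\kap_e^2,-\kap_f^2)$ along a spanning chain of the class, hence $\ell-1$ independent vectors. The parallel-class identities live purely in the $A$-slots, again $\ell-1$ per class. Because the three families occupy disjoint coefficient slots (constant, linear, quadratic), independence across families is immediate. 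Within each family, independence follows from disjointness of the supports: sectors are disjoint branch sets, and distinct series classes, and likewise distinct parallel classes, are disjoint edge sets. This gives $\dim\ge$ the right side of \eqref{eq:classification}.

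\emph{Upper bound.} I would first use the scaling homogeneity to decouple the three slots. Rescaling all conductances by $t$ leaves $\alpha$ unchanged and multiplies $\beta$ by $t^{-1}$ at fixed injection $r$. Substituting into \eqref{eq:layer}, the constant, linear and quadratic parts of each $u_e$ pick up the distinct factors $t^{-1}$, $t^{0}$, $t^{1}$, so each degree slice must vanish separately. The identity space is therefore a direct sum of an $A$-, a $B$- and a $C$-subspace. It then suffices to show, for each certified circuit, three equalities:
\begin{itemize}
\item the $B$-slice has dimension $b$;
\item the $C$-slice has dimension $\sum_{\rm series}(\ell-1)$;
\item the $A$-slice has dimension $\sum_{\rm parallel}(\ell-1)$.
\end{itemize}
For each of the 502 configured circuits, the exact rational null space of a sampled constraint system is an upper bound by Corollary~\ref{cor:decidable}. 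I would sample until that null space is as small as the lower bound, after which the two bounds coincide and the dimension is certified. Dead-edge coefficients are excluded by the restriction to live edges. Liveness itself is certified by exhibiting one rational sample with $\alpha_e\beta_e\ne0$, or by the path criteria of Sec.~\ref{sec:identity}.

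\emph{Final clause and main obstacle.} If the live subnetwork is input-connected, then $b=1$. With no series or parallel degeneracy, both sums in \eqref{eq:classification} vanish, so the dimension is $1$. The main obstacle is bookkeeping rather than analysis: the combinatorial side must be computed correctly for each circuit, and the sampled upper bound must actually hit it. This means computing the live sectors, the series classes under the two-probe injection-free-component criterion, and the parallel classes, all by enumeration. If a circuit's sampled null space stays above the predicted count, I would add samples at fresh generic rationals. The bootstrap guarantees the unknown space is finite-dimensional, so finitely many generic samples suffice. Any residual excess would then be an identity genuinely missing from the classification, which the certificates report does not occur on this family.
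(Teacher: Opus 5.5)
Your proposal is correct and is essentially the paper's own argument: the sector, series and parallel families of Sec.~\ref{sec:identity} give the lower bound, and the exact rational null space of a sampled constraint system (Corollary~\ref{cor:decidable}) gives an upper bound, which the paper reports meets the lower bound on every one of the 502 configured circuits. Your scaling split into constant, linear and quadratic slices matches the paper's per-degree certificate. It is not needed for the total count, since equal total dimensions already force each slice to match, but it is harmless and gives a sharper check.
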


\begin{conjecture}
\label{conj:general}
Equation~\eqref{eq:classification} is the $(1,1)$ identity-space
dimension of every circuit whose data span is not degenerate.
\end{conjecture}

Under a
global rescaling $\kap\to c\kap$ the free state is $0$-homogeneous and
the nudge response is $(-1)$-homogeneous, so evaluating a conservation
identity at $c\kap$ and collecting powers of $c$ makes each polynomial
degree of $u_e=A_e+B_e\kap_e+C_e\kap_e^2$ vanish separately: the
identity space is graded, and the constant, linear, and quadratic
slices are independent classification problems, hosting respectively
the parallel families, the sector identities, and the series families.
The per-degree certificate confirms the graded classification exactly,
$\dim_0=$ parallel classes, $\dim_1=b$, $\dim_2=$ series classes, on
the 400 of the 502 enumerated circuits with live edges and on 6182 of
6934 multigraphs, the remainder vacuous for want of a live edge, with
triple parallel classes included and zero mismatches. The linear slice is a theorem at every circuit size
(Sec.~\ref{sec:blocktheorem}); the constant and quadratic slices
remain conjectural, with the deletion-contraction route mapped and the
inhomogeneous terms priced by certified collapse criteria. The
degeneracy that marks the conjecture's boundary is visible in the
certificates: on very small live subnetworks, where the drop vectors
have rank one in the data, accidental identities appear that the
genericity clause excludes. A floating-point survey of the higher
layers, with singular-value gaps of $10^{10}$ and above separating the
identified null spaces from the rest of the spectrum, finds no global
identity at $(2,1)$, $(1,2)$, or $(2,2)$ on generic circuits, and on
degenerate ones the series- and parallel-class families together with
further accidental identities of the same rank-deficient kind.

\section{The sector theorem}
\label{sec:blocktheorem}

Theorem~\ref{thm:blockcons} says the sector masses \emph{are} conserved,
for arbitrary element laws, which bounds a circuit's capacity from below.
The theorem of this section is the converse half and is linear: in the
linear slice of the identity space there is nothing \emph{but} the sector
coefficients, so the lower bound is tight.

\begin{theorem}[Sectors]
\label{thm:blocks}
For every circuit, the linear slice of the $(1,1)$ identity space,
$u_e=w_e\kap_e$ on live edges, consists exactly of the coefficient
vectors constant on each input-separated live sector: its dimension is
$b$.
\end{theorem}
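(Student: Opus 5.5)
The plan is to prove two inclusions. The first, that coefficient vectors $w_e=s_B$ constant on each input-separated live sector lie in the linear slice, is already the localized Tellegen identity of Sec.~\ref{sec:identity}: $\sum_{e\in B}\kap_e\alpha_e\beta_e=0$ sector by sector. The content is the converse. Suppose $\sum_e w_e\kap_e\alpha_e\beta_e=0$ identically in $(\kap,v,r)$, with $w$ supported on live edges. We must show that $w$ is constant on each sector.

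Since the pairing is bilinear in $(v,r)$, the identity is equivalent to the matrix statement $X(\kap)^{\mathsf T}\,\mathrm{diag}(w\circ\kap)\,Y(\kap)=0$. Here the columns of $X=D^{\mathsf T}x_0$ span the free currents divided by $\kap$, and the columns of $Y$ span the feedback drops. In current--drop language the condition reads $\sum_e w_e\, i_e(x_0)\,\Delta_e y=0$ for all free currents $i\in\mathcal{F}$ and all feedback drops in $\mathcal{T}$, but with the constraint that $i$ and $\Delta y$ are the particular equilibrium ones at conductance $\kap$.

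The key step exploits the freedom in $\kap$ to decouple them. I would first fix an edge $e_0$ and use the bootstrap parametrization of Lemma~\ref{lem:bootstrap}, $d_e^{\mathsf T}x(z)=(a_e+b_ez)/(1+\sigma z)$. Expanding the identity in $z$ and comparing the top coefficient gives a new identity on the circuit with $e_0$ contracted (the $z\to\infty$ limit). Comparing the constant term gives one with $e_0$ deleted. These are deletion--contraction reductions that preserve the linear-slice form.

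The induction on $M$ then runs as follows. Pick $e_0$ inside a sector $B$ with at least two live edges. Contracting $e_0$ keeps $B$ a single sector, unless the contraction merges an endpoint with an input, which I would avoid by choosing $e_0$ away from inputs when possible. By induction $w$ is constant on $B\setminus\{e_0\}$, up to the live-edge changes caused by contraction. Repeating with a different $e_0'$ forces $w_{e_0}$ to equal the same constant. The base cases, the ``certified landing zone'', are the small circuits covered by Theorem~\ref{thm:classification}. There, the linear slice is certified to have dimension $b$ exactly, via the graded per-degree certificate $\dim_1=b$.

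The main obstacle will be controlling how liveness and the sector structure change under contraction and deletion. Contracting can kill edges by shorting them to inputs, and deleting can split a sector or make edges dead. Both operations can also create series or parallel coincidences. The grading by degree helps with the last point: parallel and series classes live in the constant and quadratic slices, so they cannot pollute the linear slice.

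For liveness, I would try first to show that every sector with at least two live edges has an edge whose contraction preserves the liveness of all other edges of $B$ and keeps $B$ connected. Such an edge should exist, for instance one on an output-to-input path, chosen not incident to an input, or else the whole sector is a star at an input that can be handled directly. The argument then needs a propagation step: an identity constant on two overlapping edge sets is constant on their union. This is what lets the argument glue the local conclusions into constancy on the whole connected sector, and it is the step where connectedness of the live sector is genuinely used.
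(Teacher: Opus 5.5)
\textbf{There is a genuine gap.} Your first inclusion (the localized Tellegen identity) and your minor-passing step match the paper. The $e_0$ term is $w_{e_0}z\,\alpha_{e_0}\beta_{e_0}$, which vanishes as $z\to0$ and is $O(1/z)$ as $z\to\infty$. So the identity passes to both the deletion and the contraction with unchanged coefficients, and the grading keeps series and parallel coincidences out of the linear slice.

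The gap is in the induction that is meant to use these minors. Its step rests on the unproved claim that every live sector contains an edge whose contraction keeps all other sector edges live and the sector connected. The gluing ``repeat with $e_0'$'' needs two such edges whose complements overlap. Liveness is genuinely fragile under contraction. If $x$ and $y$ are also joined through a non-input vertex $a$ of degree two, contracting $e_0=(x,y)$ leaves $a$ hanging from a single node, so both of its edges lose their free current and the induction hypothesis says nothing about them.

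More seriously, no step of your scheme ever \emph{creates} an equality of coefficients; it only transports equalities from a minor. A sector with two live edges gets nothing from contracting either one. So everything rests on the base cases, and Theorem~\ref{thm:classification} cannot supply them. That family has exactly two inputs, at most two outputs and at most five vertices, and its random circuits are samples, not a cover. Reductions that avoid the inputs never lower the number of inputs, which is unbounded. You would have to characterize the irreducible configurations and show they form a finite certified family, and you have no handle on that.

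\textbf{The missing idea} is to stop trying to preserve a whole sector. The paper works with one pair of live edges sharing a non-input vertex at a time. It uses the witness characterizations to choose access paths: an edge is free-live exactly when it lies on a simple input--input path, and nudge-live exactly when it lies on a simple output-to-input path with input-free interior. Deleting everything off those paths and contracting along them shrinks the circuit to at most seven vertices, carrying at most three inputs and two outputs, with the pair still live.

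Only the pair's liveness has to survive. Edges that die drop out of the identity, and surviving live edges only add constraints. That bounded landing zone is certified exhaustively in exact arithmetic: of all 376\,978 configurations, every one of the 253\,854 with the pair live forces equal coefficients on the pair. Equality then propagates along shared non-input vertices, which is sector connectivity. This is the pairwise version of your union-gluing step, and it needs no control over how the rest of the sector changes.
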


The converse inclusion is the localized Tellegen identity of
Sec.~\ref{sec:identity}. The forward inclusion is proved by a
reduction whose finite core is certified exactly. Linear-slice
identities restrict cleanly to minors: in the $z=\kap_g$ expansion the
deleted edge's term is $w_g z\,\alpha_g\beta_g$, which vanishes in the
$z\to0$ limit and is $O(1/z)$ in the $z\to\infty$ limit, so both the
deletion and the contraction of any edge carry the identity to the
smaller circuit with the same coefficients, with no inhomogeneous
term. Pendant input-free structures need never be removed: passive,
they conduct nothing, so they ride along as decorations that only add
constraints. For any two live edges sharing a non-input vertex, the
witness characterizations (an edge is free-live exactly when it lies
on a simple input-to-input path, and nudge-live exactly when it lies
on a simple output-to-input path with input-free interior, since no
nudge current passes a grounded node) supply access paths whose guided
contractions and deletions shrink the circuit to at most seven
vertices with both edges still live. Every possible landing
configuration is certified: over the 376\,978 circuits comprising all
configurations on up to six vertices carrying at most the three inputs
and two outputs that a reduced core can hold, and all seven-vertex
configurations with at most eight edges that contain the marked pair,
with liveness decided structurally and all arithmetic exact, every one
of the 253\,854 configurations with the pair live forces equal linear
coefficients on the pair, with zero violations. Equality propagates
along shared non-input vertices, which is exactly sector connectivity,
and Theorem~\ref{thm:blocks} follows.

\section{The capacity of a local rule}
\label{sec:capacity}

Theorem~\ref{thm:blocks} converts into a statement about every local
rule at once, at every circuit size. Designing retention through
weighted masses has antecedents outside physical learning:
designer-weighted consolidation variables \cite{benna2016computational},
per-parameter learning rates as the protection mechanism
\cite{ebrahimi2020uncertainty}, and preconditioners chosen to preserve
the flow's fixed points \cite{kao2021natural}. What the circuit adds is
that the invariant is exact and its count is topological.

\begin{theorem}[Designed-mass capacity]
\label{thm:dichotomy}
In a linear circuit, let the rule \eqref{eq:localrule} be contrastive
with bilinear
coefficient $g_e:=h_e^{(1,1)}$ not identically zero on any plastic
edge. On any circuit with $b$ input-separated live sectors, the
conserved functionals of weighted-mass type, $q_e'=w_e\kap_e/g_e$,
form exactly the space
\begin{equation}
\bigoplus_{B}\bigl\{\,s\,Q_g^B : s\in\mathbb{R}\,\bigr\},
\end{equation}
where $Q_g^B$ is the rule's designed mass on sector $B$, given by
$q_e'=\kap/g_e$ there. The designable conservation capacity of every
local contrastive rule equals the number of grounded sectors: one
designed mass per sector, and never two independent ones on the same
sector.
\end{theorem}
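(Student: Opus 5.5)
The plan is to reduce the statement to Theorem~\ref{thm:blocks} through Lemma~\ref{lem:layers}. A separable functional $Q=\sum_e q_e(\kap_e)$ is conserved by the rule \eqref{eq:localrule} exactly when every layer identity \eqref{eq:layer} holds. For a rule in the weighted-mass class, the $(1,1)$ layer has coefficients
\[
u_e=q_e'\,h_e^{(1,1)}=(w_e\kap_e/g_e)\,g_e=w_e\kap_e .
\]
This product is defined wherever $g_e\neq0$, and by analyticity and continuity it extends as $w_e\kap_e$ across the isolated zeros of $g_e$. So the $(1,1)$ condition says precisely that the vector $(w_e\kap_e)_e$ lies in the linear slice of the $(1,1)$ identity space. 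On live edges, Theorem~\ref{thm:blocks} then says this holds if and only if $w$ is constant on each input-separated live sector $B$, say $w_e=s_B$. Consequently $q_e'=s_B\kap_e/g_e$ on $B$, and $Q$ restricted to $B$ is $s_B\,Q_g^B$ up to an additive constant. That is the claimed direct sum, provided the remaining issues below are settled.

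\textbf{Dead edges and the direct-sum structure.} On dead edges the drops are identically zero in one probe. The rule is contrastive, so those conductances never move, and they contribute only constants (or their coefficients are unconstrained in \eqref{eq:layer}). They therefore add no weighted-mass functional that is nontrivial along trajectories, and I would count them as frozen rather than as extra capacity. The sum is direct because sectors are disjoint branch sets, so distinct $s_B$ vary independent coordinates. Nonzero $g_e$ on plastic edges makes each $Q_g^B$ a nonconstant function of the sector's conductances, so the $b$ masses are linearly independent.

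\textbf{Higher layers (main obstacle).} The $(1,1)$ layer is not the only condition: Lemma~\ref{lem:layers} also requires $\sum_e q_e' h_e^{(p,q)}\alpha_e^p\beta_e^q=0$ for every other bidegree with $p,q\ge1$. I expect this to be the delicate step, because the paper's higher-layer results are only numerical surveys. There are two ways to resolve it.
\begin{itemize}
\item Interpret "designable capacity" as the $(1,1)$ statement, with the designer free to choose $h_e$. For the upper bound ("never two") only the $(1,1)$ layer is needed, since it is a necessary condition. For the lower bound I would exhibit a rule achieving it, namely the bilinear rule $h_e=g_e\alpha_e\beta_e$, which has no higher layers. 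Its sector masses are conserved directly by the localized Tellegen identity, or by Theorem~\ref{thm:blockcons} with $g_e$ in place of $1/w_e$.
\item Read the statement as concerning the designed mass of a given rule. Then the upper bound still comes from $(1,1)$ alone. The existence half requires the higher layers of that rule to vanish against $\kap_e/g_e$, which holds in the bilinear case. For general rules I would state existence only under that hypothesis.
\end{itemize}
I would write the proof with the first reading, making explicit that "never two independent ones on a sector" needs only the necessary $(1,1)$ condition and Theorem~\ref{thm:blocks}.
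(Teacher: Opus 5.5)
Your proposal is correct and follows the paper's proof essentially step for step. The $(1,1)$ layer of a weighted-mass functional is the vector $(w_e\kap_e)_e$ in the linear slice, Theorem~\ref{thm:blocks} forces $w$ to be constant on each sector, and the bilinear rule $h_e=g_e\alpha_e\beta_e$ attains the bound by the localized Tellegen identity. Your remark that higher layers only add constraints is the same caveat the paper states after its proof. So the upper bound needs only $(1,1)$, and exact attainment of the full direct sum is guaranteed for bilinear rules; making this explicit is slightly more careful than the paper's proof itself.
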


\begin{proof}
Conservation of $Q$ places $(q_e'g_e)_e$ in the $(1,1)$ identity
space; for weighted-mass functionals this vector is $w_e\kap_e$, in
the linear slice, which Theorem~\ref{thm:blocks} identifies as the
sector span: $w_e=s_B$ on each sector, one free constant per sector.
Purely bilinear rules attain the bound because
$\dot Q_g^B=\sum_{e\in B}\kap_e\alpha_e\beta_e=0$ by the localized
Tellegen identity.
\end{proof}

Under Conjecture~\ref{conj:general}, and unconditionally on every
certified circuit, the statement is complete: the conserved separable
functionals of any such rule are the per-sector designed masses, the
topology-donated series and parallel differences, and arbitrary
functionals of frozen-edge conductances, and nothing else, since the
constant and quadratic slices of $(q_e'g_e)_e$ are then confined to
the parallel and series families.

Every local contrastive rule that measures against the free state
therefore faces the same budget: one designable invariant per grounded
sector, plus the topology-donated series and parallel differences and
the trivial invariants on frozen edges. A rule
that clamps its outputs before measuring loses the whole sector budget, by
Proposition~\ref{prop:al}, and keeps every topology-donated series and
parallel difference: its identity space is a proper subspace of the
others'. The budget is otherwise a boundary-condition
choice. Splitting the input set so that the live circuit separates into
more sectors buys free invariants with zero communication, a design
lever that costs only electrode placement; on an input-connected
circuit the budget is one, however the rule is built. The threshold
rule of Ref.~\cite{chatterjee2025remembrance} sits entirely on the
frozen branch, protecting by not learning; rate modulation
(Theorem~\ref{thm:dichotomy}) designs the plastic branch, choosing
which mass each sector will hold fixed. What no local rule can do is
hold two independent designed masses on the same sector. A general rule
with layers beyond $(1,1)$ obeys the same bound, since each additional
layer only adds constraints; bilinearity is what makes the bound
attained.

\section{The broadcast ladder}
\label{sec:ladder}

One global scalar per additional invariant closes the gap beyond the
sector budget, and the correction directions come from measurements the
hardware already makes.

\begin{proposition}[Ladder]
\label{prop:ladder}
Let $u$ be the update field, let $u'_1,\dots,u'_{m-1}$ be auxiliary
contrastive fields, update fields measured under further boundary
conditions, so that each satisfies $\sum_e\kap_eu'_{j,e}=0$ by
\eqref{eq:tellegen}, and let $W=\mathrm{diag}(w)$ carry the designed
mass $Q_w$. Whenever the $(m{-}1)\times(m{-}1)$ system
$\sum_j\ip{a^i\!\circ\!\kap}{W^{-1}u'_j}\,c_j
=\ip{a^i\!\circ\!\kap}{W^{-1}u}$ is solvable, the corrected rule
\begin{equation}
\dot\kap = W^{-1}\Bigl(u-\textstyle\sum_{j=1}^{m-1}c_j\,u'_j\Bigr)
\label{eq:ladder}
\end{equation}
exactly conserves the $m$ functionals
$Q_w,Q_{a^1},\dots,Q_{a^{m-1}}$ at the price of $m-1$ broadcast scalars
per update.
\end{proposition}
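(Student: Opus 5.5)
The plan is to compute the time derivative of each of the $m$ functionals along the corrected flow \eqref{eq:ladder} and show that each vanishes. Throughout, I read the weighted mass $Q_a$ as the functional with $\partial Q_a/\partial\kap_e=a_e\kap_e$, matching the designed-mass convention $q_e'=w_e\kap_e$ of Theorem~\ref{thm:dichotomy} (for a bilinear field the factor $g_e$ is absorbed into $u$). Then $\dot Q_a=\ip{a\circ\kap}{\dot\kap}$. I will verify the designed mass $Q_w$ first and the targets $Q_{a^i}$ second, because the two use different mechanisms.

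\emph{Step 1: the designed mass.} Along the flow,
\begin{equation}
\dot Q_w=\ip{w\circ\kap}{W^{-1}\bigl(u-\textstyle\sum_j c_ju'_j\bigr)}
=\ip{\kap}{u}-\textstyle\sum_j c_j\ip{\kap}{u'_j}.
\end{equation}
The factor $W^{-1}$ cancels $w$ componentwise. Each remaining pairing is of the form $\sum_e\kap_e\alpha_e\beta_e$ for some pair of probe states, and the localized Tellegen identity \eqref{eq:tellegen} makes each one vanish. This uses only that each auxiliary field is contrastive with respect to a free state that obeys Kirchhoff's law with injection confined to the inputs, together with a feedback state that vanishes there, which is exactly what the hypothesis asserts. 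The point to stress is that the result holds for \emph{any} scalars $c_j$, so the correction never disturbs $Q_w$. This is what separates the construction from gradient projection along $a^i\circ\kap$: that projection would not annihilate $\ip{\kap}{\cdot}$.

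\emph{Step 2: the targets.} For each $i$,
\begin{equation}
\dot Q_{a^i}=\ip{a^i\circ\kap}{W^{-1}u}-\textstyle\sum_j c_j\ip{a^i\circ\kap}{W^{-1}u'_j}.
\end{equation}
This vanishes precisely when $c$ solves the stated $(m-1)\times(m-1)$ linear system, which the hypothesis assumes. The system has one equation per target and one unknown per auxiliary field, so it is square. Its entries are global inner products of locally measured fields, so the solution $c$ is a set of $m-1$ scalars computed once per update and broadcast to every edge. That accounts for the stated price.

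\emph{Step 3: batches and the main obstacle.} For batched updates, the same argument applies verbatim to the averaged fields, since Tellegen is linear in the field and the system is solved for the averaged quantities. I expect no real obstacle inside the proof, because solvability is assumed. The delicate point is the hypothesis itself: one must check that the auxiliary fields, measured under further boundary conditions, still clamp their feedback to zero at the same input set. If that fails, \eqref{eq:tellegen} fails for them and Step 1 breaks. I would state this explicitly, and note that the rank-one obstruction for single-output networks mentioned in the introduction is exactly where the solvability hypothesis fails.
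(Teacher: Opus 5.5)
Your proposal is correct and is essentially the paper's own proof: $\dot Q_w=\sum_e\kap_e\bigl(u_e-\sum_jc_ju'_{j,e}\bigr)$ vanishes term by term through \eqref{eq:tellegen} for every choice of the $c_j$, and the $m-1$ target derivatives vanish by the defining linear system. One small correction to your closing aside: for $m=2$ on a single-output circuit, $u'_1\parallel u$ leaves the scalar system generically solvable, and the actual obstruction is that its solution annihilates the corrected field, not that solvability fails.
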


\begin{proof}
Every correction is itself a contrastive field, so
$\dot Q_w=\sum_e\kap_e(u_e-\sum_jc_ju'_{j,e})=0$ term by term through
\eqref{eq:tellegen}, independent of the $c_j$; the remaining
derivatives vanish by the choice of the $c_j$.
\end{proof}

In simulation the two-mass ladder holds both invariants at the
$10^{-10}$ level and below under conservation-controlled integration,
reaching machine precision on smooth instances, while the no-broadcast
control drifts at the $10^{-2}$ to $10^{-1}$ level
(Fig.~\ref{fig:ladder}). Of twenty-four runs over three circuits,
twelve are recorded; the other twelve were discarded when a conductance
reached zero under the conservation-controlled integrator. Two of the
recorded runs held both invariants over the full horizon, two reached a
fixed point of the corrected flow, and eight ended when the step size
collapsed against the conservation tolerance and are reported up to
that time; the no-broadcast control survived on two.
Fig.~\ref{fig:ladder} traces one recorded instance, the two-output
wheel, at fixed step size to $t=50$.

\begin{figure}[t]
\includegraphics[width=\columnwidth]{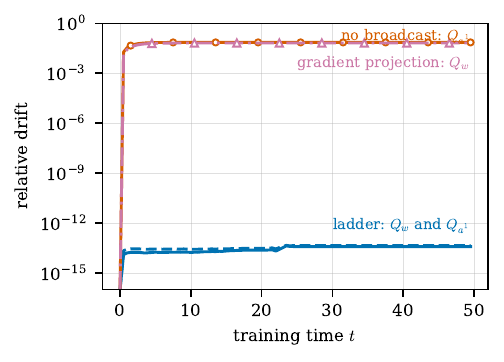}
\caption{Relative drift of the two designed masses on a two-output
wheel circuit under the corrected ladder (both masses at machine
precision, one broadcast scalar), the no-broadcast control (the
second mass drifts), and the gradient-direction projection (its
target holds while the designed mass $Q_w$ breaks). Fixed-step
integration; markers distinguish the two drifting curves.}
\label{fig:ladder}
\end{figure}

\emph{Corrections must be contrastive fields.} Projecting along the
gradient directions $W^{-1}(a\!\circ\!\kap)$ of the target functionals,
the textbook constrained-flow ansatz, conserves the target while
destroying the designed mass, which drifts at the $10^{-2}$ to
$10^{-1}$ level across our instances: generic directions do not satisfy
\eqref{eq:tellegen}, and $Q_w$'s conservation rests on it. The physics
supplies the correct correction cone for free; geometry does not.

\emph{Single-output circuits have no ladder.} For one output and two
input terminals the free drops have a fixed direction set by the input
difference and the nudge drops a fixed direction set by the single
injection, so the contrastive field family has rank one in the data and
every auxiliary field is parallel to $u$: the system in
Proposition~\ref{prop:ladder} is singular and \eqref{eq:ladder}
annihilates the update. Ladder steps exist once the field family has
rank two, requiring at least two outputs or three input terminals. The
single-output degeneracy parallels the trajectory equivalence of EP and
CL at one output \cite{dangol2026memory}: circuits with one output are
blind to distinctions that larger boundaries expose.

\emph{Descent is not free.} The corrected field \eqref{eq:ladder} is no
longer a metric gradient of the loss, and in our runs the training loss
sometimes rises where the uncorrected flow would descend. The full
metric projection onto the joint constraint set restores guaranteed
descent at the price of $m$ broadcasts rather than $m-1$. Whether the
gap is fundamental, one broadcast buying conservation and the next
buying conservation with descent, is open.

\section{Discussion}
\label{sec:discussion}

The capacity question has a quantifier that distinguishes it from
existing counting results. The conservation laws of a fixed flow, a
given architecture under Euclidean gradient descent, are counted by the
Lie-algebra rank machinery of Ref.~\cite{marcotte2023abide}, with
extensions to momentum and some non-Euclidean geometries
\cite{marcotte2024keep}; symmetry-derived laws in network training are
catalogued in Refs.~\cite{kunin2021neural,zhao2025symmetry}. Those
results characterize what a given dynamics happens to conserve. The
present bound quantifies over the rules: no local contrastive dynamics
on a certified input-connected circuit free of series and parallel
degeneracy conserves two independent
separable functionals on the same sector, whatever its form. The
balancedness invariants of deep linear networks \cite{du2018balanced},
a multi-parameter family of conserved norm differences, show that such
capacity is a property of the parametrization and its response
structure rather than of locality alone; in circuit terms they are
series-class invariants, and the degeneracy hypotheses are what exclude
them.

A quantifier also separates the combinatorial half of the results from the
nearest prior art on identities among spanning-forest polynomials.
Ref.~\cite{vlasev2012four} proves a quadratic forest identity on four
marked vertices and conjectures the dimension of the space of such
identities, and Ref.~\cite{fraser2023column} settles that conjecture:
the space of bilinear combinations vanishing on \emph{every} graph has
dimension $m(m-2)$ for $m$ marked vertices. Their
coefficients are indexed by set partitions of the marked vertices, a
finite index set independent of the graph, and their identities are
required to hold universally. Ours are indexed by the vertex pairs of one
fixed circuit, marked or not, and are required to hold identically in
that circuit's conductances. The two settings overlap only for two inputs
and two outputs, and there the universal solution space on the shared
index set is one-dimensional, spanned by the input-input pair whose
feedback drop vanishes identically, while a four-vertex path with
interior inputs has a four-dimensional identity space; a hand
computation with the two bases finds the relation spaces meeting in
zero. For more marked vertices the shapes cease to match
at all, their second factor being an $(m-1)$-forest sum where ours is
always a three-forest sum. The universal quantifier removes exactly the
graph-specific content that the capacity question is about.

The ladder's single-invariant rung is old practice. Adaptive flow
networks enforce a designed separable constraint through one global
multiplier \cite{hu2013adaptation,ronellenfitsch2016global}, and dual
decomposition broadcasts one scalar per constraint as a matter of
course. What was missing is the converse, and
Theorem~\ref{thm:dichotomy} supplies it: the multiplier is not an
implementation convenience but a necessity, since no local rule
achieves a second invariant per sector at zero communication. In the
language of the learning channel \cite{baldi2016local}, exact
conservation beyond the sector budget requires strictly positive
feedback bandwidth, at a rate of one scalar per functional.

In machine-learning vocabulary, a resistive network trained by
equilibrium propagation is an energy-based model whose energy function
is physical; the results here are the conservation structure of
contrastive training for that class, and they rest on the flow-effort
orthogonality of the network, which a general energy-based model does
not have.

For hardware the classification is an operating manual. A designer gets
one exact invariant per grounded sector at no communication cost and
chooses each through the rate profile of Theorem~\ref{thm:dichotomy};
splitting the input electrodes to multiply the sectors multiplies the
free invariants; series and parallel motifs and dead subnetworks add
protected quantities without any rule change, and the certificate
computes a given board's capacity outright; beyond that, each protected
functional costs one global feedback line, with corrections that the
training procedure already measures. On the certified family, retention
capacity, counted in independently protected separable functionals, is
the grounded sector count plus the broadcast budget, plus whatever the
topology donates.

Three boundaries hold. The classification theorem
is certified for the enumerated family and conjectured in general; the
deletion-contraction route outlined in Sec.~\ref{sec:classification} is
the intended proof, and the certificates are its base cases. The
bootstrap argument uses the rational dependence of linear-circuit
responses on each conductance, so nonlinear elements, whose responses
are not rational in the modified parameter, need separate treatment;
and since it is untrainable elements, not nonlinear ones, that break the
standard conservation law
\cite{mcginnis2026conservation,dangol2026memory}, while
Theorem~\ref{thm:blockcons} survives arbitrary trainable element laws,
the linear case is where the completeness question is sharply posed. And the functional class
is separable sums, the class closed under the locality of the rules
themselves; non-separable invariants of local rules, if any exist, are
outside the present counting.

\emph{Reproducibility.} All certificates, layer surveys, ladder runs,
and the enumeration are regenerated by the scripts in the accompanying
repository \cite{repo}, with the certificate path in exact rational
arithmetic end to end.




\begin{thebibliography}{99}

\bibitem{stern2021supervised}
M. Stern, D. Hexner, J. W. Rocks, and A. J. Liu,
Supervised learning in physical networks: From machine learning to
learning machines,
Phys. Rev. X \textbf{11}, 021045 (2021).

\bibitem{dillavou2022demonstration}
S. Dillavou, M. Stern, A. J. Liu, and D. J. Durian,
Demonstration of decentralized physics-driven learning,
Phys. Rev. Appl. \textbf{18}, 014040 (2022).

\bibitem{stern2023learning}
M. Stern and A. Murugan,
Learning without neurons in physical systems,
Annu. Rev. Condens. Matter Phys. \textbf{14}, 417 (2023).

\bibitem{scellier2017equilibrium}
B. Scellier and Y. Bengio,
Equilibrium propagation: Bridging the gap between energy-based models
and backpropagation,
Front. Comput. Neurosci. \textbf{11}, 24 (2017).

\bibitem{mcginnis2026conservation}
J. A. McGinnis, A. G. Kline, and Y. Mori,
A conservation law for equilibrium propagation and coupled learning,
arXiv:2606.15444.

\bibitem{dangol2026memory}
B. Dangol,
Untrainable elements determine what physical
learning remembers,
arXiv:2608.00097.

\bibitem{chatterjee2025remembrance}
P. Chatterjee, M. Guzman, and A. J. Liu,
Remembrance of tasks past in tunable physical networks,
arXiv:2512.03799.

\bibitem{mcginnis2026coercivity}
J. A. McGinnis, X. Li, and Y. Mori,
Coercivity and local convergence of physical learning in linear
circuits,
arXiv:2606.15443.

\bibitem{cel1997}
J.~Cel,
\emph{Tellegen's theorem for subnetworks},
J. Circuits Syst. Comput. \textbf{7}, 641 (1997).

\bibitem{benna2016computational}
M.~K. Benna and S. Fusi,
\emph{Computational principles of synaptic memory consolidation},
Nat. Neurosci. \textbf{19}, 1697 (2016).

\bibitem{ebrahimi2020uncertainty}
S. Ebrahimi, M. Elhoseiny, T. Darrell, and M. Rohrbach,
\emph{Uncertainty-guided continual learning with Bayesian neural
networks}, International Conference on Learning Representations (2020).

\bibitem{kao2021natural}
T.-C. Kao, K.~T. Jensen, G.~M. van de Ven, A. Bernacchia, and
G. Hennequin,
\emph{Natural continual learning: success is a journey, not (just) a
destination}, Advances in Neural Information Processing Systems
\textbf{34} (2021).

\bibitem{marcotte2023abide}
S. Marcotte, R. Gribonval, and G. Peyr\'e,
Abide by the law and follow the flow: Conservation laws for gradient
flows,
Advances in Neural Information Processing Systems \textbf{36} (2023).

\bibitem{marcotte2024keep}
S. Marcotte, R. Gribonval, and G. Peyr\'e,
Keep the momentum: Conservation laws beyond Euclidean gradient flows,
Proceedings of the 41st International Conference on Machine Learning
(2024).

\bibitem{kunin2021neural}
D. Kunin, J. Sagastuy-Brena, S. Ganguli, D. L. K. Yamins, and H. Tanaka,
Neural mechanics: Symmetry and broken conservation laws in deep
learning dynamics,
International Conference on Learning Representations (2021).

\bibitem{zhao2025symmetry}
B. Zhao, R. Walters, and R. Yu,
Symmetry in neural network parameter spaces,
Transactions on Machine Learning Research (2026).

\bibitem{du2018balanced}
S. S. Du, W. Hu, and J. D. Lee,
Algorithmic regularization in learning deep homogeneous models: Layers
are automatically balanced,
Advances in Neural Information Processing Systems \textbf{31} (2018).

\bibitem{vlasev2012four}
A.~Vlasev and K.~Yeats,
\emph{A four-vertex, quadratic, spanning forest polynomial identity},
Electron. J. Linear Algebra \textbf{23}, 923 (2012).

\bibitem{fraser2023column}
M.~Fraser and K.~Yeats,
\emph{Column expansion identities and quadratic spanning forest
identities}, Australas. J. Combin. \textbf{86}, 271 (2023).

\bibitem{hu2013adaptation}
D. Hu and D. Cai,
Adaptation and optimization of biological transport networks,
Phys. Rev. Lett. \textbf{111}, 138701 (2013).

\bibitem{ronellenfitsch2016global}
H. Ronellenfitsch and E. Katifori,
Global optimization, local adaptation, and the role of growth in
distribution networks,
Phys. Rev. Lett. \textbf{117}, 138301 (2016).

\bibitem{baldi2016local}
P. Baldi and P. Sadowski,
A theory of local learning, the learning channel, and the optimality of
backpropagation,
Neural Netw. \textbf{83}, 51 (2016).

\bibitem{repo}
Code, certificates, and experiments:
\url{https://github.com/dangoldbj/physical-learning-capacity}.

\end{thebibliography}
\end{document}